\newcommand{\ket}[1]{|#1\rangle}
\newcommand{\bra}[1]{\langle #1|}
\newtheorem{theorem}{Theorem}
\newenvironment{proof}[1][Proof]{\noindent\textbf{#1.} }{\ \rule{0.5em}{0.5em}}
\begin{document}

\begin{frontmatter}
\title{Entanglement fidelity of the standard quantum teleportation channel  }
\author{Gang Li}
\author{Ming-Yong Ye}
\ead{myye@fjnu.edu.cn}
\author{Xiu-Min Lin}
\address{Fujian Provincial Key Laboratory of Quantum Manipulation and New Energy Materials,
\\College of Physics and Energy, Fujian Normal University, Fuzhou 350007, China}

\begin{abstract}
We consider the standard quantum teleportation protocol where a general bipartite
state is used as entanglement resource. We use the entanglement fidelity to
describe how well the standard quantum teleportation channel transmits quantum
entanglement and give a simple expression for the entanglement fidelity when
it is averaged on all input states.
\end{abstract}

\begin{keyword}
 Quantum teleportation, entanglement fidelity
\end{keyword}

\end{frontmatter}

\section{Introduction}
Nowadays the protocol of quantum teleportation \cite{1} plays an important
role in quantum information science \cite{2}. Quantum teleportation can
naturally be related to quantum channels since there are an input and an
output state involved. Mathematically, a quantum channel is a completely
positive and trace-preserving (CPTP) operator that maps an input density
operator to an output density operator, and it can be represented in an
operator-sum form \cite{2,3}.

The property of a quantum teleportation channel is dependent on both the
entanglement resource and the particular local operations and classical
communication (LOCC ) we used \cite{4,5,7}. In a realistic quantum
teleportation the sender and the receiver usually share a mixed entangled
state, instead of a maximally entangled pure state, accounted to the
decoherence. Quantum teleportation using a mixed entangled state is equivalent
to a noisy quantum channel. In 2001, it was shown that the standard quantum
teleportation protocol using a mixed entangled resource is the same as a
generalized depolarizing channel \cite{7}.

In this paper we consider the standard quantum teleportation protocol where a
general bipartite state is used as the entanglement resource. It is known that
the ordinary fidelity \cite{7b,8} between the input state and the output state is
usually used to measure the quality of a quantum teleportation channel
\cite{4,5}. However, people may be interested in how well a quantum
teleportation channel preserves quantum entanglement in the case that the particle to be teleported
is entangled with some other particle. To answer this question,
we will consider entanglement fidelity \cite{2,9} instead of the ordinary
fidelity. To our knowledge, entanglement fidelity has not yet be used to
measure the quality of quantum teleportation channel. The main result of this
paper is to give a simple expression for the entanglement fidelity of the
standard quantum teleportation channels when it is averaged on all input states.
\section{The standard quantum teleportation}
A general quantum teleportation protocol is as follows. Suppose the sender
Alice and the receiver Bob share an entangled state $\chi_{34}$, where 3 and 4
stand for the particles shared by Alice and Bob respectively, and Alice is
given another particle 1 in an unknown state $\rho_{1}$ to be teleported to
Bob. We assume each particle is associated with a $d$-dimensional Hilbert
space. To start quantum teleportation, Alice first performs a measurement on
particles 1 and 3, which is described by a collection of measurement operators
$M_{13}^{i}$ with $\sum_{i}M_{13}^{i\dagger}M_{13}^{i}=I_{13}$, where $i$
denotes measurement result. The state of Bob's particle after the measurement
will change to
\begin{equation}
\rho_{4}^{i}=\frac{1}{p_{i}}Tr_{13}\left[  \left(  M_{13}^{i}\otimes
I_{4}\right)  \left(  \rho_{1}\otimes\chi_{34}\right)  \left(  M_{13}%
^{i\dagger}\otimes I_{4}\right)  \right]
\end{equation}
if the result $i$ occurs, where
\begin{equation}
p_{i}=Tr_{134}\left[  \left(  M_{13}^{i}\otimes I_{4}\right)  \left(  \rho
_{1}\otimes\chi_{34}\right)  \left(  M_{13}^{i\dagger}\otimes I_{4}\right)
\right]
\end{equation}
is the probability of obtaining the measurement result $i$. After obtaining
the measurement result $i$, Alice tells Bob the result $i$ via a classical
channel. Then Bob applies a quantum operation $\varepsilon^{i}$, a completely
positive and trace-preserving map, to his particle. Obviously, after the
operation the state of Bob's particle is changed to $\varepsilon^{i}\left(
\rho_{4}^{i}\right)  $. Therefore, over all measurement result $i$, the final
teleported state is given by $\gamma_{4}$ $=\sum_{i}p_{i}\varepsilon
^{i}\left(  \rho_{4}^{i}\right)  $. This protocol of quantum teleportation can
be viewed as a quantum channel $\varepsilon$ which maps the input density
operator $\rho_{1}$ to the output density operator $\gamma_{4}$ \cite{10}. In
an operator-sum form the quantum teleportation channel $\varepsilon$ can be
written as \cite{10b,10c,10d}
\begin{equation}
\gamma_{4}=\varepsilon\left(  \rho_{4}\right)  =\sum_{i}A_{4}^{i}\rho_{4}%
A_{4}^{i\dagger}%
\end{equation}
with $\sum_{i}A_{4}^{i\dagger}A_{4}^{i}=I_{4}$ and $\rho_{4}$ being the same
state as $\rho_{1}$. It is obvious that the operators $A_{4}^{i}$ depend on
the entanglement resource $\chi_{34}$, the sender's measurement operators
$M_{13}^{i}$ and the receiver's corresponding CPTP maps $\varepsilon^{i}$.

In the standard quantum teleportation, the maximally entangled state%
\begin{equation}
\left\vert \Omega^{0,0}\right\rangle =\frac{1}{\sqrt{d}}\sum_{i=0}%
^{d-1}\left\vert i\right\rangle \otimes\left\vert i\right\rangle ,
\end{equation}
is usually assumed to be used as the entanglement resource to teleport a
$d$-dimensional state \cite{1}. And the sender's measurement is a generalized
Bell measurement with measurement operators $\left\{  \left\vert \Omega
^{n,m}\right\rangle \left\langle \Omega^{n,m}\right\vert \right\}  $, which
are defined as
\begin{eqnarray}
\left\vert \Omega^{n,m}\right\rangle&=&\left(  U^{n,m} \otimes I\right)
\left\vert \Omega^{0,0}\right\rangle ,\\
n,m  &=&0,1,\cdots,d-1.
\end{eqnarray}
Here the unitary operator
\begin{equation}
U^{n,m}=\sum_{j=0}^{d-1}e^{i2\pi nj/d}\ket{j}\bra{j \oplus m ~mod~ d}.
\end{equation}

When a measurement result denoted by index $\left(  n,m\right)  $ is obtained,
i.e., the state of sender's particles is mapped to state $\left\vert
\Omega^{n,m}\right\rangle $, the receiver's corresponding CPTP map is defined
to be the unitary operator $U^{n,m}$ \cite{1}. This protocol of quantum
teleportation can teleport any $d$-dimensional state perfectly. However, it
can only be viewed as a noisy quantum channel when a general $d\times d$
bipartite state $\chi$, instead of $\left\vert \Omega^{0,0}\right\rangle $, is
used.

We will consider the "standard" quantum teleportation channel using a
general entanglement resource $\chi$, where we use the same LOCC as that when
$\left\vert \Omega^{0,0}\right\rangle $ is used. This kind of the standard quantum teleportation
 can be written in the operator-sum form
\begin{equation}
\varepsilon\left(  \rho\right)  =\sum_{n,m}p_{nm}U^{n,-m}\rho U^{n,-m\dagger},
\label{tc}%
\end{equation}
where $p_{nm}=\left\langle \Omega^{n,m}\right\vert \chi\left\vert \Omega
^{n,m}\right\rangle $ and $\rho$ is the state to be teleported \cite{7}.
\section{Entanglement fidelity of the standard quantum teleportation channel}
In this section we consider the standard quantum teleportation channel
$\varepsilon$ where a general $d\times d$ bipartite state $\chi$ is used as
the entanglement source. This standard quantum teleportation channel
$\varepsilon$ has an operator-sum form as shown in Eq. (\ref{tc}). Our main
purpose is to give a quantity to measure how well quantum entanglement is
preserved by this standard quantum teleportation channel.

The standard quantum teleportation channel $\varepsilon$ in Eq. (\ref{tc})
generally cannot transport quantum state perfectly and a way is needed to
measure how well the output state $\varepsilon\left(  \rho\right)  $ is
similar to the input state $\rho$. The fidelity \cite{7b,8} of the input state
$\rho$ and the output state $\varepsilon\left(  \rho\right)  $ can be used to
do this, which is defined as%
\begin{equation}
F\left(  \rho,\varepsilon\left(  \rho\right)  \right)  =\left(  Tr\sqrt
{\rho^{\frac{1}{2}}\varepsilon\left(  \rho\right)  \rho^{\frac{1}{2}}}\right)
^{2}.
\end{equation}
When the input state is a pure state $\left\vert \psi\right\rangle $ the
fidelity will be
\begin{equation}
F\left(  \left\vert \psi\right\rangle ,\varepsilon\left(  \left\vert
\psi\right\rangle \left\langle \psi\right\vert \right)  \right)  =\left\langle
\psi\right\vert \varepsilon\left(  \left\vert \psi\right\rangle \left\langle
\psi\right\vert \right)  \left\vert \psi\right\rangle .
\end{equation}
It measures the similarity between the output state $\varepsilon\left(
\left\vert \psi\right\rangle \left\langle \psi\right\vert \right)  $ and the
input state $\left\vert \psi\right\rangle $ in the way that it will be zero
when the output state $\varepsilon\left(  \left\vert \psi\right\rangle
\left\langle \psi\right\vert \right)  $ is orthogonal to the input state
$\left\vert \psi\right\rangle $ and be the unit when the output state
$\varepsilon\left(  \left\vert \psi\right\rangle \left\langle \psi\right\vert
\right)  $ is the same as the input state $\left\vert \psi\right\rangle $.

However, there are many possible input states and different input states can
lead to different fidelities. The average of the fidelity $F\left(  \left\vert
\psi\right\rangle ,\varepsilon\left(  \left\vert \psi\right\rangle
\left\langle \psi\right\vert \right)  \right)  $ over all input pure state
$\left\vert \psi\right\rangle $ is usually introduced to characterize the
quality of the standard quantum teleportation channel $\varepsilon$.
Precisely, the quantity
\begin{equation}
\overline{F}\left(  \varepsilon\right)  =\int_{\psi}d\psi\left\langle
\psi\right\vert \varepsilon\left(  \left\vert \psi\right\rangle \left\langle
\psi\right\vert \right)  \left\vert \psi\right\rangle
\end{equation}
is used to measure how well the standard quantum teleportation channel
$\varepsilon$ is similar to a perfect channel, where the integral is performed
with respect to the uniform distribution $d\psi$ over all input pure states
\cite{4}. For the standard quantum teleportation channel $\varepsilon$ using a
general $d\times d$ bipartite state $\chi$ as the entanglement resource, it
has been shown that
\begin{equation}
\overline{F}\left(  \varepsilon\right)  =\frac{d}{d+1}f+\frac{1}{d+1},
\end{equation}
where
\begin{equation}
f=p_{00}=\left\langle \Omega^{0,0}\right\vert \chi\left\vert \Omega
^{0,0}\right\rangle \label{f}%
\end{equation}
is the generalized singlet fraction \cite{4,11,12}.

We can also make use of the entanglement fidelity \cite{2,9} to characterize
the similarity between the input $\rho$ and the output state $\varepsilon
\left(  \rho\right)  $ of the standard quantum teleportation channel
$\varepsilon$. The entanglement fidelity $F_{e}\left(  \rho,\varepsilon\left(
\rho\right)  \right)  $ is defined as%
\begin{equation}
F_{e}\left(  \rho,\varepsilon\left(  \rho\right)  \right)  =\left\langle
\varphi\right\vert I\otimes\varepsilon\left(  \left\vert \varphi\right\rangle
\left\langle \varphi\right\vert \right)  \left\vert \varphi\right\rangle,
 \label{vv}%
\end{equation}
where $\left\vert \varphi\right\rangle $ is a $d\times d$ bipartite state and
is a purification of the input state $\rho$. The entanglement fidelity
$F_{e}\left(  \rho,\varepsilon\left(  \rho\right)  \right)  $ measures how
well the entangled state $\left\vert \varphi\right\rangle $ is preserved. We
note that any purification of the input state $\rho$ can be used in Eq.
(\ref{vv}) and it always gets the same result \cite{2}. The entanglement
fidelity $F_{e}\left(  \rho,\varepsilon\left(  \rho\right)  \right)  $ is
dependent on the input state $\rho$, but we can use%
\begin{equation}
\overline{F}_{e}\left(  \varepsilon\right)  =\int_{\varphi}d\varphi
\left\langle \varphi\right\vert I\otimes\varepsilon\left(  \left\vert
\varphi\right\rangle \left\langle \varphi\right\vert \right)  \left\vert
\varphi\right\rangle \label{av}%
\end{equation}
to measure how well the standard quantum teleportation channel $\varepsilon$
preserve quantum entanglement, where the integral is performed with respect to
the uniform distribution $d\varphi$ over all $d\times d$ bipartite pure
states, which is equal to sample mixed state $\rho$ uniformly with respect
 to Hilbert-Schmidt measure \cite{12b}. Our main result is to give an expression for $\overline{F}_{e}\left(
\varepsilon\right)  $, which is summarized in the following theorem:

\begin{theorem}
In the standard quantum teleportation channel $\varepsilon$ where a general
$d\times d$ bipartite state $\chi$ is used as the entanglement resource, the
average of the entanglement fidelity $\overline{F}_{e}\left(  \varepsilon
\right)  $ defined in Eq. (\ref{av}) is given by
\begin{equation}
\overline{F}_{e}\left(  \varepsilon\right)  =\frac{d^{2}}{d^{2}+1}f+\frac
{1}{d^{2}+1},
\end{equation}
where $\ f=\left\langle \Omega^{0,0}\right\vert \chi\left\vert \Omega
^{0,0}\right\rangle $ is the generalized singlet fraction defined in Eq.
(\ref{f}).

\end{theorem}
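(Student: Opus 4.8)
The plan is to reduce the integral in Eq.~(\ref{av}) to a sum of entanglement fidelities of the individual unitary Kraus operators $U^{n,-m}$ appearing in Eq.~(\ref{tc}), and then evaluate each term. First I would expand the channel inside the definition of $\overline{F}_{e}(\varepsilon)$ using the operator-sum form: since $I\otimes\varepsilon(\proj{\varphi})=\sum_{n,m}p_{nm}\,(I\otimes U^{n,-m})\proj{\varphi}(I\otimes U^{n,-m\dagger})$, we get $\overline{F}_{e}(\varepsilon)=\sum_{n,m}p_{nm}\int d\varphi\,|\bra{\varphi}(I\otimes U^{n,-m})\ket{\varphi}|^{2}$. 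Here I would use the standard fact that the entanglement fidelity of a channel depends only on its Kraus operators through quantities of the form $|\mathrm{Tr}(A)/d|^{2}$; concretely, for a fixed operator $V$ on the $d$-dimensional space, $\int d\varphi\,|\bra{\varphi}(I\otimes V)\ket{\varphi}|^{2}$ should reduce to something expressible via $|\mathrm{Tr}\,V|^{2}$ after averaging over bipartite pure states.

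The key computational step is the Haar-averaging identity. Writing a random bipartite pure state as $\ket{\varphi}=(I\otimes W)\ket{\Omega^{0,0}}\sqrt{d}\,\cdot\,(\text{diagonal})$ is one route, but cleaner is to use the Schmidt decomposition $\ket{\varphi}=\sum_k \sqrt{\lambda_k}\ket{k}\ket{k'}$ and integrate over both the simplex of eigenvalues $\{\lambda_k\}$ (Hilbert--Schmidt / Dirichlet distribution) and the Haar-random local bases. Then $\bra{\varphi}(I\otimes V)\ket{\varphi}=\sum_k \lambda_k \bra{k'}V\ket{k'}$ in the appropriate basis, and the modulus squared involves $\sum_{k,l}\lambda_k\lambda_l\,\overline{v_{kk}}\,v_{ll}$ where $v_{kl}$ are matrix elements of $V$ in a Haar-random basis. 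Averaging the basis first collapses this to a combination of $|\mathrm{Tr}\,V|^{2}$ and $\mathrm{Tr}(V^\dagger V)=\mathrm{Tr}\,I=d$ (using unitarity of $V=U^{n,-m}$), with coefficients determined by the second and fourth moments of the components; averaging the $\lambda_k$ over the Hilbert--Schmidt simplex then supplies the remaining numerical factors, which I expect to be of the form $\alpha_d|\mathrm{Tr}\,V|^{2}/d^{2}+\beta_d$ with $\alpha_d+\beta_d\cdot(\text{something})$ pinned down by the normalization $\overline{F}_{e}=1$ when $V=I$ for all terms (i.e.\ when $\chi=\proj{\Omega^{0,0}}$, so $p_{00}=1$).

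The final step is to insert the values $\mathrm{Tr}\,U^{n,-m}=d\,\delta_{n,0}\delta_{m,0}$ (immediate from the definition of $U^{n,m}$, since the phase sum $\sum_j e^{i2\pi nj/d}$ vanishes unless $n=0$ and the shift by $m$ kills the diagonal unless $m=0$). This makes only the $(n,m)=(0,0)$ Kraus term contribute the ``$|\mathrm{Tr}\,V|^{2}$'' piece, multiplied by $p_{00}=f$, while all $d^{2}$ terms contribute the constant piece weighted by $\sum_{n,m}p_{nm}=\mathrm{Tr}\,\chi=1$. Matching the two resulting coefficients against the $\chi=\proj{\Omega^{0,0}}$ normalization should force exactly $\overline{F}_{e}(\varepsilon)=\frac{d^{2}}{d^{2}+1}f+\frac{1}{d^{2}+1}$.

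I expect the main obstacle to be the Haar/Hilbert--Schmidt averaging in the second step: getting the combinatorial moment coefficients right (the integrals of products $\lambda_k\lambda_l$ over the Dirichlet distribution on the $(d^{2}-1)$-simplex, combined with the fourth-moment formula for components of a Haar-random unit vector) is where sign and normalization errors creep in. A useful cross-check is that the whole expression must be affine in $f$, must equal $1$ at $f=1$, and for $d=1$ must collapse correctly; another is to verify consistency with the known ordinary-fidelity result $\overline{F}(\varepsilon)=\frac{d}{d+1}f+\frac{1}{d+1}$, since both are governed by the same Kraus trace data but with different averaging measures.
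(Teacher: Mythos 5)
Your first and last steps coincide with the paper's proof: expanding the Kraus form to get $\overline{F}_{e}(\varepsilon)=\sum_{n,m}p_{nm}\int d\varphi\,|\bra{\varphi}(I\otimes U^{n,-m})\ket{\varphi}|^{2}$, and using $\mathrm{Tr}\,U^{n,m}=d\,\delta_{n0}\delta_{m0}$ so that only the $(0,0)$ term carries the $|\mathrm{Tr}\,V|^{2}$ contribution. The gap is in the middle step, in two respects. First, the normalization check at $\chi=\proj{\Omega^{0,0}}$ cannot ``force'' the coefficients: with your ansatz the answer is $\alpha_d f+\beta_d$, and setting $f=1$ yields only the single relation $\alpha_d+\beta_d=1$, which is equally satisfied by the ordinary-fidelity expression $\frac{d}{d+1}f+\frac{1}{d+1}$ and by infinitely many other affine functions of $f$; the coefficient $\alpha_d=\frac{d^{2}}{d^{2}+1}$ must actually be computed. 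Second, the route you sketch for that computation contains an error: the Schmidt spectrum of a Haar-random $d\times d$ bipartite pure state is \emph{not} Dirichlet-distributed; the induced (Hilbert--Schmidt) eigenvalue density carries a Vandermonde-squared factor $\prod_{i<j}(\lambda_i-\lambda_j)^2$, giving $\int d\varphi\sum_k\lambda_k^{2}=2d/(d^{2}+1)$ rather than the flat-Dirichlet value $2/(d+1)$. Feeding the wrong spectral moments into your $\sum_{k,l}\lambda_k\lambda_l\,\overline{v_{kk}}\,v_{ll}$ expansion produces a wrong coefficient in front of $f$. (You also conflate the $(d-1)$-simplex of Schmidt coefficients with the $(d^{2}-1)$-simplex of components of the global state vector.)

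The clean way to close the gap --- and what the paper does --- is to avoid the Schmidt decomposition entirely: treat $\ket{\varphi}=V\ket{00}$ as a Haar-random vector in the full $d^{2}$-dimensional space, write $|\bra{\varphi}(I\otimes U)\ket{\varphi}|^{2}=(\bra{\varphi}\otimes\bra{\varphi})\,\mu\,(\ket{\varphi}\otimes\ket{\varphi})$ with $\mu=I\otimes U\otimes I\otimes U^{\dagger}$, and apply Schur's lemma to the twirl $\int dV\,(V^{\dagger}\otimes V^{\dagger})\mu(V\otimes V)$, which is a combination of the identity and the swap with coefficients fixed by $\mathrm{Tr}\,\mu=d^{2}|\mathrm{Tr}\,U|^{2}$ and $\mathrm{Tr}(\mu\digamma)=d^{2}$. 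This yields $\int d\varphi\,|\bra{\varphi}(I\otimes U)\ket{\varphi}|^{2}=\left(|\mathrm{Tr}\,U|^{2}+1\right)/(d^{2}+1)$ in one stroke, from which the theorem follows by summing against $p_{nm}$. Your local-basis-plus-spectrum decomposition can be made to work, but only if you input the correct induced spectral moment $\overline{\mathrm{Tr}\rho^{2}}=2d/(d^{2}+1)$, which is the same Haar fourth-moment information in disguise.
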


\begin{proof}
The standard quantum teleportation channel $\varepsilon$ has an operator-sum
form as shown in Eq. (\ref{tc}), and we can submit it to $\overline{F}%
_{e}\left(  \varepsilon\right)  $ in Eq. (\ref{av}) to get%
\begin{equation}
\overline{F}_{e}\left(  \varepsilon\right)  =\sum_{n,m}p_{nm}\int_{\varphi
}d\varphi\lambda_{nm}\left(  \varphi\right)
\end{equation}
where
\begin{equation}
\lambda_{nm}\left(  \varphi\right)  =\left\vert \left\langle \varphi
\right\vert \left(  I\otimes U^{n,-m}\right)  \left\vert \varphi\right\rangle
\right\vert ^{2}.
\end{equation}
We can also write $\lambda_{nm}\left(  \varphi\right)  $ as%
\begin{equation}
\lambda_{nm}\left(  \varphi\right)  =\left(  \left\langle \varphi\right\vert
\otimes\left\langle \varphi\right\vert \right)  \mu_{nm}\left(  \left\vert
\varphi\right\rangle \otimes\left\vert \varphi\right\rangle \right)  ,
\label{qwa}%
\end{equation}
where%
\begin{equation}
\mu_{nm}=I\otimes U^{n,-m}\otimes I\otimes U^{n,-m\dagger}. \label{u}%
\end{equation}
Our next step is to compute $\int_{\varphi}d\varphi\lambda_{nm}\left(
\varphi\right)  $ using Eq. (\ref{qwa}). We first note that
\begin{equation}
 \int_{\varphi}d\varphi\lambda_{nm}\left(  \varphi\right)
 =\left\langle 00\right\vert \int_{V}\left(  V^{\dagger}\otimes V^{\dagger
}\right)  \mu_{nm}\left(  V\otimes V\right)  dV\left\vert 00\right\rangle ,
\end{equation}
where $V$ are unitary operators defined on a Hilbert space of dimension $d^{2}$ and
the integral is performed with respect to the uniform distribution $dV$ over
all unitary operators. Using Schur's lemma \cite{10,13}, we can find%
\begin{equation}
\int_{\varphi}d\varphi\lambda_{nm}\left(  \varphi\right)  =\alpha_{nm}%
+\beta_{nm},
\end{equation}
with
\begin{equation}
\alpha_{nm}=\frac{Tr\left(  \mu_{nm}\right)  }{d^{4}-1}-\frac{Tr\left(
\mu_{nm}\digamma\right)  }{d^{2}\left(  d^{4}-1\right)  }\
\end{equation}%
\begin{equation}
\beta_{nm}=\frac{Tr\left(  \mu_{nm}\digamma\right)  }{d^{4}-1}-\frac{Tr\left(
\mu_{nm}\right)  }{d^{2}\left(  d^{4}-1\right)  }%
\end{equation}
where $\digamma$ is the exchange operator. Using the identity \cite{2,14}
\begin{eqnarray}
Tr\left(  U^{n,m}\right)   &  =d\delta_{n0}\delta_{m0},\\
Tr\left(  \left(  A\otimes B\right)  \digamma\right)   &  =Tr\left(
AB\right)  .
\end{eqnarray}
We have the following results%
\begin{equation}
Tr\left(  \mu_{nm}\right)  =d^{4}\delta_{n0}\delta_{m0}, Tr\left(
\mu_{nm}\digamma\right)  =d^{2} \label{fg}%
\end{equation}
Then we have%
\begin{equation}
\alpha_{nm}=\frac{d^{4}\delta_{n0}\delta_{m0}}{d^{4}-1}-\frac{1}{d^{4}-1},
\end{equation}%
\begin{equation}
\beta_{nm}=\frac{d^{2}}{d^{4}-1}-\frac{d^{2}\delta_{n0}\delta_{m0}}{d^{4}-1}.
\end{equation}
Then we have
\begin{equation}
\overline{F}_{e}\left(  \varepsilon\right)    =\sum_{n,m}p_{nm}\left(
\alpha_{nm}+\beta_{nm}\right)
 =\frac{d^{2}}{d^{2}+1}p_{00}+\frac{1}{d^{2}+1}\nonumber
\end{equation}
Here $p_{00}=\left\langle \Omega^{0,0}\right\vert \chi\left\vert \Omega
^{0,0}\right\rangle =f$ is the generalized singlet fraction.
\end{proof}

We note that the expressions of the average $\overline{F}
_{e}\left(  \varepsilon\right)  $ of the entanglement fidelity and the average
$\overline{F}\left(  \varepsilon\right)  $ of the ordinary fidelity are very
similar, which is due to the fact that both quantities can be deduced via
Schur's lemma \cite{10,13}.
\section{Conclusion}
We use the entanglement fidelity to measure the quality of the standard quantum
teleportation channel where a general $d\times d$ bipartite state $\chi$
instead of the maximally entangled pure state $\left\vert \Omega
^{0,0}\right\rangle $ is used as the entanglement resource. We obtain an
explicit expression for the average $\overline{F}_{e}\left(  \varepsilon
\right)  $ of the entanglement fidelity, which is only dependent on the
generalized singlet fraction. Our obtained $\overline{F}_{e}\left(  \varepsilon
\right)$ quantifies how well the teleportation channel $\varepsilon$ preserves quantum entanglement.

\section*{Acknowledgments}
This work was supported by the National Natural Science Foundation of China (Grant Nos. 61275215, 11004033), the Natural Science Foundation of Fujian Province (Grant No. 2010J01002), and the National Fundamental Research Program of China (Grant No. 2011CBA00203).


\end{document}